\newlength\figwidth
\newtheorem{Theorem}{Theorem}
\DeclareMathOperator\Prob{Prob}
\newcommand{\dotminus}{\mathbin{\text{\@dotminus}}}
\newcommand{\@dotminus}{
  \ooalign{\hidewidth\raise1ex\hbox{.}\hidewidth\cr$\m@th-$\cr}}
\journal{xxx}
\begin{document}

\begin{frontmatter}



\title{On the security of a class of diffusion mechanisms for image encryption}


\author[cityu]{Leo Yu Zhang\corref{corr}}
\ead{leocityu@gmail.com}
\author[xtu]{Yuansheng Liu}
\author[cityu]{Kwok-Wo Wong}
\author[ENDIF]{Fabio Pareschi}
\author[swu]{Yushu Zhang}
\author[DEI]{Riccardo Rovatti}
\author[ENDIF]{Gianluca Setti}
\cortext[corr]{Corresponding author.}
\address[cityu]{Department of Electronic Engineering, City University of Hong Kong, Hong Kong, China}
\address[xtu]{School of Software, Dalian University of Technology, Dalian, China}
\address[ENDIF]{Engineering Department in Ferrara, University of Ferrara, Italy}
\address[swu]{School of Electronics and Information Engineering, Southwest University, Chongqing, China}
\address[DEI]{Department of Electrical, Electronic and Information Engineering, University of Bologna, Italy}

\begin{abstract}
The need for fast and strong image cryptosystems motivates researchers to develop new techniques to apply traditional cryptographic primitives in order to exploit the intrinsic features of digital images. One of the most popular and mature technique is the use of complex dynamic phenomena, including chaotic orbits and quantum walks, to generate the required key stream. In this paper, under the assumption of plaintext attacks we investigate the security of a classic diffusion mechanism (and of its variants) used as the core cryptographic primitive in some image cryptosystems based on the aforementioned complex dynamic phenomena. We have theoretically found that regardless of the key schedule process, the data complexity for recovering each element of the equivalent secret key from these diffusion mechanisms is only $O(1)$. The proposed analysis is validated by means of numerical examples. Some additional cryptographic applications of our work are also discussed.
\end{abstract}

\begin{keyword}
Image encryption \sep Cryptanalysis \sep Diffusion \sep Plaintext attack \sep Permutation
\end{keyword}

\end{frontmatter}


\section{Introduction}
\label{sec:intro}
The recent years increase in the popularity of the internet and multimedia communication has resulted in the fast development of information exchange and consumer electronics applications.
However, it has also led to an increase in the demand of secure and real-time transmission of these data.
The easiest way to cope with this is to consider the multimedia stream as a standard bit stream and apply traditional cryptographic approaches like 3DES \cite{barker2012sp} and AES \cite{daemen2002design} with proper mode of operation.
Yet, the desire for cryptosystems more efficient and specifically designed for multimedia stream has drawn increasing research attention in the past decade
\cite{liu2010survey,lian2013design,lian2007commutative,cheng2000partial,li2007design,
magli2011transparent,zeng2003efficient,fridrich1998symmetric,chen2004symmetric,mao2004novel,zhang2014chaotic,
parvin2014new,norouzi2014simple,yang2015novel,zhang2013symmetric,zhu2011chaos}.
A particular field of interest in this area is the development of strong and fast image cryptosystems.

Two major approaches can be identified in the literature for the design of image encryption algorithms.
The first one exploits some complex dynamic phenomena, such as chaotic behavior and quantum walks,
as the image encryption algorithm core.
Many schemes belonging to this approach are based on the permutation-diffusion architecture depicted in Fig.~\ref{fig:structure}, which was first proposed by Fridrich in \cite{fridrich1998symmetric}.
The encryption process is based on the iteration of permutation (i.e., image element transposition)
and diffusion (i.e., value modification) operations.
Almost all works proposing an extension of Fridrich's work can be categorized into the following two classes:

\begin{enumerate}\setlength{\itemsep}{0pt}
\item Developing novel permutation techniques. In Fridrich's original design, permutation is implemented
 by iterating a 2D discretized chaotic map like Baker or Cat map. Chen \textit{et al.} suggested using 3D
 chaotic map to de-correlate the relationship among pixels in a more efficient way \cite{chen2004symmetric,mao2004novel}.
 In \cite{wong2008fast}, Wong \textit{et al.} proposed an ``add-and-then-shift" strategy to include certain
 amount of diffusion effect into permutation, thus reducing the overall number of iteration rounds, and improving the efficiency. For the same purpose, Zhu \textit{et al.} suggested carrying out permutation to bit-level
 instead of pixel-level \cite{zhu2011chaos,zhang2013symmetric}. It is also worth mentioning that there are permutation
 techniques based on general Gray code \cite{zhou2013n,zanin2014gray}, which can be considered as permutation
 carried out at an arbitrary bit length.

\item Developing novel diffusion techniques. As illustrated by Fridrich in \cite{fridrich1998symmetric}, the diffusion
 operation aims to spread the information of plaintext to the whole ciphertext. This process can be formulated as
    \begin{equation*}
    c(l) = p(l) \dotplus G(c(l-1), k(l)),
    \end{equation*}
 where $\dotplus$ denotes the modulo addition, $p(l)$, $c(l)$ and $k(l)$ denote the $l$-th plaintext element, ciphertext
 element and element derived from the secret key, respectively. For security and efficiency considerations, the function
 $G$ should be both simple and nonlinear,
 a typical example is a chaos-based look-up table \cite{wong2002fast}. By taking advantage of
 the low complexity and non-commutable properties between the bitwise exclusive or and the modulo addition operation,
 which are popular in traditional crytosystems like IDEA and RC6, Chen \textit{et al.} in \cite{chen2004symmetric} suggested
 implementing diffusion according to the following formula
    \begin{equation}
    \begin{IEEEeqnarraybox}[][c]{rCl}
    \IEEEstrut
    c(l) = ( p(l) \dotplus k(l)) \oplus k(l) \oplus c(l-1), \nonumber
    \IEEEstrut
    \end{IEEEeqnarraybox}
    \end{equation}
    where $\oplus$ stands for bitwise exclusive or.
    Many other works adopt similar (or even the same) diffusion mechanisms, see \cite{zhu2014image,zhu2012novel,zhang2013symmetric,yang2015novel,Rao:ModifiedCKBA:ICDSP07,
    parvin2014new,norouzi2014simple,CH:HCKBA:IJBC10,liu2011color,zhou2015cascade} for examples.
    It is not surprising that the computational efficient modulo multiplication can also be incorporated into the diffusion stage \cite{zhu2014image,wang2014cryptanalysis}. Moreover, recent works suggested using real number arithmetic to enhance the security level of the diffusion stage \cite{norouzi2014simple,yang2015novel}
    at the cost of a reduced computational efficiency due to the employment of complicated arithmetic operations.
\end{enumerate}

\begin{figure}[htbp]
\centering
\includegraphics[width=3.5in]{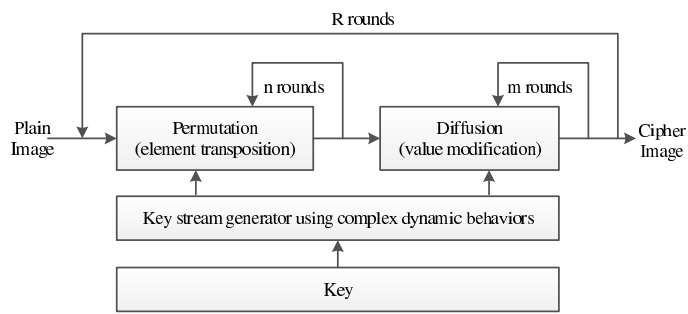}
\caption{Schematic diagram of Fridrich's permutation-diffusion architecture.}
\label{fig:structure}
\end{figure}

The second major approach in the design of image cryptosystem is based on optical technology schemes, which are
supposed to benefit from the intrinsic property of optic systems to process high dimensional complex data in parallel.
The most classic image cryptosystem based on optical technology is the double random phase encoding (DRPE) method
developed by R\'{e}fr\'{e}gier and Javidi in \cite{refregier1995optical}. A comprehensive review on this topic
can be found in \cite{chen2014advances}.
Though the DRPE technique has several advantages, like high speed,
multidimensional processing and robustness, the underlying arithmetic operation, which is matrix multiplication, is
linear. From the cryptanalysis point of view, linearity leads to a low security level. Thus the DRPE method is
vulnerable under various kinds of attack \cite{peng2006known,carnicer2005vulnerability,peng2006chosen} and the adoption of
image cryptosystem based on optical technology for real application should be cautious.

In this paper we take into account the first approach only, i.e., that exploiting complex dynamic phenomena. In particular, we investigate on some security-related aspects of these systems. Note that in any image cryptosytem, security is a critical issue. In fact, due to the particular structure of digital image files (such as, for example, horizontal/vertical correlation) many statistical analysis based methods may reduce the security.
Typical statistical tests include histogram analysis, correlation analysis, entropy analysis \cite{wu2013local}, sensitivity analysis \cite{chen2004symmetric}
and randomness analysis \cite{Rukhin:TestPRNG:NIST}.

In recent years, a lot of image ciphers employing complex dynamic phenomena and fulfilling all the aforementioned statistical tests requirements, have been proposed but afterwards found to be insecure under various attack models
\cite{arroyo2013cryptanalysis,li2011breaking,li2013breaking,li2012breaking,
Zhang2015breaking,wang2014cryptanalysis,solak2010cryptanalysis}. For example, the equivalent key stream used for
permutation of Fridrich's design can be retrieved in chosen-plaintext (CP) attack scenario \cite{solak2010cryptanalysis}
and a chaos-based image cipher with Feistel structure is insecure with respect to differential attack when the round number
is smaller than $5$ \cite{zhang2012cryptanalyzing}.
Note that in the literature, the cryptanalysis of these image ciphers is usually performed case-by-case, since any cryptanalytic method is usually effective only on a particular image cipher.
Conversely, despite being more useful from a theoretical point of view, only a few works provide security evaluation of some general cryptographic components.
In \cite{li2008general}, Li \textit{et al.} presented a general quantitative study of permutation-only
encryption algorithms against plaintext attacks. Their result was further improved in \cite{Jolfaei2015TIFS} with respect to data and computation complexity.
In \cite{chen2012period,chen2013period,chen2014period},
Chen \textit{et al.} studied the period distribution of the generalized discrete Cat map,
which is a fundamental building block in many permutation schemes.

In this paper we want to make a step further in the evaluation of generic cryptographic components for image cryptosystem by studying the security of the differential equation of modulo addition (DEA) in the form $(\alpha \dotplus k) \oplus (\beta \dotplus k) = y$. This analysis is not completely new.
In \cite{li2011breaking}, it was reported that $3$ pairs of chosen queries $(\alpha, \beta)$ are sufficient to reveal the unknown $k$ of the formula $(\alpha \dotplus k) \oplus (\beta \dotplus k) = y$. It is further reduced to 2 pairs of chosen
$(\alpha, \beta)$ in \cite{li2013breaking}.
As far as we know, these works must be considered as independent analyses of particular image ciphers
\cite{CH:HCKBA:IJBC10,Rao:ModifiedCKBA:ICDSP07}. In our previous work \cite{Zhang2015breaking},
it was reported that the diffusion mechanism suggested by Chen \textit{et al.} \cite{chen2004symmetric} can be cast
to the form $(\alpha \dotplus k) \oplus (\beta \dotplus k) = y$ under CP attack and the similar method can be also used
to analyze other DEA that includes modulo multiplication operation.

In more detail, we take into account the three image cryptosystems proposed in  \cite{parvin2014new},
in \cite{norouzi2014simple} and in \cite{yang2015novel} as case studies, all of them adopting Fridrich's
permutation-diffusion scheme, and we study the resistance against plaintext attack of the adopted diffusion mechanisms by exploiting security results achieved by the aforementioned DEA equation analysis.
Specifically, we evaluate the data complexity (i.e., required number of pairs of $(\alpha, \beta)$) for solving
$(\alpha \dotplus k) \oplus (\beta \dotplus k) = y$ and its extension in a known-plaintext (KP) attack scenario.
The main difference between this work and previous ones is that we assume that $\alpha$ and $\beta$ cannot be freely chosen, as for example in \cite{li2011breaking,li2013breaking}. This allows us to apply obtained results to the security analysis of the three aforementioned cryptosystem schemes. A full analytic result is presented to derive a sufficient condition for solving the equation $(\alpha \dotplus k) \oplus (\beta \dotplus k) = y$; furthermore, some design weakness of its variants are pointed out. Numerical simulation results are then provided to support our analyses.


The innovative contribution of this paper is three-fold. First, we analyze the relationship between
a class of popular diffusion mechanisms
and the DEA $(\alpha \dotplus k) \oplus (\beta \dotplus k) = y$ by studying three example image ciphers
\cite{parvin2014new,norouzi2014simple,yang2015novel}. It is also worth mentioning that the similar DEA can be found in many other designs \cite{zhu2014image,zhu2012novel,zhang2013symmetric,yang2015novel,Rao:ModifiedCKBA:ICDSP07,
    parvin2014new,norouzi2014simple,CH:HCKBA:IJBC10,liu2011color,zhou2015cascade}
     so the application of our analyses is not limited to the three case studies.
Second, we analytically investigate the sufficient
condition to solve $(\alpha \dotplus k) \oplus (\beta \dotplus k) = y$ and we also experimentally
present a simple KP attack to  a variant of this DEA. The conclusion drawn from our result is that security is substantially lower than the desired one.
Third, we study the three encryption schemes \cite{parvin2014new,norouzi2014simple,yang2015novel} which combines the investigated diffusion mechanism and secret random permutation. Their security is evaluated in detail.

The rest of this paper is organized as follows. Section~\ref{sec:notations} introduces the notations that is used in this paper
and the assumptions we work on. The three image cryptosystem case studies are reviewed in Sec.~\ref{sec:threeciphers} and the
differential equations of modulo addition are derived in Sec.~\ref{sec:problem}. Section~\ref{sec:mainresult} presents security
analyses and numerical results of the equations derived above against KP attack. The applications of our results are discussed
in Sec.~\ref{sec:applications} and conclusion remarks are drawn in the last section.

\section{Notations and main assumptions}
\label{sec:notations}
In the following, we will use the notation $\{p(i,j)\}_{i=1,j=1}^{H, W}$ and $\{p(k)\}_{k=1}^{L}$ to represent the $2$D
and $1$D format of a plain-image of size $L = H \times W$ (Height $\times$ Width). The $2$D and $1$D representations of the cipher-image $C$
are $\{c(i,j)\}_{i=1,j=1}^{H, W}$ and $\{c(k)\}_{k=1}^{L}$, respectively. We use $a_i$ to denote the $i$-th bit of an $n$-bit integer $a$ ($a \in \mathbb{Z}_2^n$) and $(a_{n-1}\cdots a_0)_2$ to denote the binary form of $a$. The default value of $n$ is $8$ unless otherwise specified.
The symbols `$\dotplus$', `$\dotminus$',`$\oplus$', `$\wedge$' and `$\|$' denote \textit{modulo $2^n$ addition}, \textit{modulo $2^n$ subtraction}, \textit{bitwise exclusive or} (XOR), \textit{bitwise and} and \textit{bitwise or}, respectively. We will use $ab$ to represent $a\wedge b$
and $\lfloor x \rfloor$ ($\lceil x \rceil$) to represent the largest (smallest) integer not greater (less) than the real number $x$.
The cardinality of a set $A$ is denoted by $\#{A}$. With the term $KS$ we will refer to all the key schedule operations of a specific algorithm, and use $KS(Seed)$ to indicate the process generating all necessary key streams given a secret
 $Seed$
by the $KS$.

In order to correctly evaluate the security level of a diffusion mechanism either in known- or chosen-plaintext attack scenario, we clarify here the power of the adversary. In the KP attack model, the adversary has access to some plaintexts and their corresponding ciphertexts. In the CP attack model, we assume that the adversary can obtain ciphertexts from any plaintext of his choice. In both scenarios, the goal of the attack is either to collect information on the secret key $Seed$ or, equivalently, on the key stream(s) $KS(Seed)$ generated from $Seed$. Hereinafter, we will consider only the problem of recovering $KS(Seed)$.


\section{Image cryptosystems review}
\label{sec:threeciphers}
In this section, we briefly review the three cryptosystems for image encryption proposed in \cite{parvin2014new}, in \cite{norouzi2014simple}, and in \cite{yang2015novel}. A detailed description of the three schemes can be found in the original works\footnote{For the sake of both clarity and uniformity, some notations and/or some operations may have been changed without affecting the security level of the schemes.}. Here, we want to highlight that, though the key schedule process of these schemes are different from the each other, all of the schemes share a very similar diffusion mechanism in the encryption process. In the next section, we will exploit this to cast the three diffusion mechanisms into the same general form
and evaluate their cryptographic strength.


\begin{enumerate}[A.]
\item \textbf{Parvin's cryptosystem.} The key schedule operation of the cipher proposed in \cite{parvin2014new}
    is based on two chaotic functions and the encryption process is composed by a row/column circular permutation and a
    sequential pixel diffusion.
        \begin{enumerate}[1]
        \item{\textit{Initialization}:} Generate three key streams $U = \{u(i)\}_{i=1}^{H}$, $V = \{v(i)\}_{i=1}^{W}$ and $K = \{k(i)\}_{i=0}^{L}$
        from $KS(Seed)$, where $U$, $V$ and $K$ are composed of random integers in interval $[1, W]$, $[1, H]$ and $[0, 255]$, respectively.
        \item{\textit{Permutations}:} Carry out row circular permutation to the plain-image $P$ using
        \begin{equation}
        p'(i, (j+u(i)) \bmod W) = p(i, j),
        \label{eq:parvin1}
        \end{equation}
        and denote the result by $P'$. Then permute $P'$ further using the circular column permutation as follows
        \begin{equation}
        s((i+v(j)) \bmod H, j) = p'(i, j).
        \label{eq:pavin2}
        \end{equation}
        \item{\textit{Diffusion}:}  Stretch $S$ to a $1$D sequence $\{s(l)\}_{l=1}^{L}$ and calculate the pixel values of the
        cipher-image by the following diffusion equation
        \begin{equation}
        c(l) = s(l)  \oplus (c(l-1) \dotplus k(l) ) \oplus k(l),
        \label{eq:diffusion}
        \end{equation}
        where $l \in [1, 2, \cdots, L]$ and $c(0) = k(0)$. Rearrange $\{c(l)\}_{l=1}^{L}$ to a matrix of size $H \times W$
        to get the cipher-image $C$.
        \end{enumerate}

\item \textbf{Norouzi's cryptosystem.} The key schedule suggested in \cite{norouzi2014simple} is based on
the hyper-chaotic system introduced in \cite{yujun2010new}. The encryption process is composed by a single diffusion process,
which can be viewed as the generalized version of the previous diffusion scheme.

    \begin{enumerate}[1]
        \item{\textit{Initialization}:} Produce a key stream $K = \{k(i)\}_{i=0}^{L}$ by running $KS(Seed)$, where $k(i)$ is $8$-bit
            integer in $[0, 255]$.
        \item{\textit{Diffusion}:} Calculate the pixel values of the cipher-image sequentially by the following bidirectional diffusion equation
        \begin{equation}
        c(l) = p(l)  \oplus (c(l-1) \dotplus k(l) ) \oplus f(P, k(l)),
        \label{eq:diffusion2}
        \end{equation}
        where $l \in [1, 2, \cdots, L]$, $c(0) = k(0)$ and
        \begin{equation}
        f(P, k(l)) = \lfloor (\sum\nolimits_{i=l+1}^{L} p(i)) \cdot k(l) \cdot 10^8 /256^4 \rfloor \bmod 256.
        \label{eq:prediffusion2}
        \end{equation}
        Rearrange $\{c(l)\}_{l=1}^{L}$ to a matrix of size $H \times W$ and denote it as $C$.
    \end{enumerate}

\item \textbf{Yang's cryptosystem.} The key schedule of the image cryptosystem proposed in \cite{yang2015novel} is
derived from the one-dimensional two-particle discrete-time quantum random walks, which is totally different from those suggested in \cite{parvin2014new,norouzi2014simple}. However, the encryption process, which is composed of a diffusion stage and a permutation
stage, is an extension of Norouzi's work \cite{norouzi2014simple}.

    \begin{enumerate}[1]
        \item{\textit{Initialization}:} Obtain the key streams $K = \{k(i)\}_{i=0}^{L}$, $U = \{u(i)\}_{i=1}^{W}$ and $V = \{v(i)\}_{i=1}^{H}$
        by running the key schedule $KS(Seed)$, where $K$ is composed of $8$-bit integers in the interval $[0, 255]$ and $U$
        and $V$ are permutation of the set $\{1, 2, \cdots, W\}$ and $\{1, 2, \cdots, H\}$, respectively.
        \item{\textit{Diffusion}:} Run the bidirectional diffusion technique characterized by Eq.~(\ref{eq:diffusion2}) to the plain-image pixels
        as follows
        \begin{equation}
        p'(l) = p(l)  \oplus (p'(l-1) \dotplus k(l) ) \oplus f(P, k(l)),
        \label{eq:Yang1}
        \end{equation}
        where $l \in [1, 2, \cdots, L]$, $p'(0) = k(0)$ and $f(P, k(l))$ is defined by Eq.~(\ref{eq:prediffusion2}).
        Rearrange $\{p'(l)\}_{l=1}^{L}$ to a matrix of size $H \times W$ and denote it as $P'$.
        \item{\textit{Permutations}:} Permute the intermediate result $P'$ using the key streams $U$ and $V$ and get the cipher-image $C$, i.e.,
        \begin{eqnarray}
        s(i, u(j))  &=& p'(i, j), \label{eq:Yang2}\\
        c(v(i), j)  &=& s(i, j), \label{eq:Yang3}
        \end{eqnarray}
        where $i \in [1, H]$ and $j \in [1, W]$.
    \end{enumerate}

\end{enumerate}

\section{Problem formulation}
\label{sec:problem}
The cryptosystems shown in the previous section are based either on a single round permutation-diffusion architecture
(Parvin's and Yang's cipher) or on a bidirectional diffusion stage (Norouzi's cipher).
In this paper, we focus our attention on the security of the considered diffusion schemes in a plaintext attack. To this aim, we will neglect at this moment all the effects of the permutation schemes in \cite{parvin2014new,norouzi2014simple,yang2015novel},
that will be considered in Sec.~\ref{sec:applications} only, along with the security of the whole cryptosystems. Mathematically, we assume that all elements of the key streams $U$ and $V$ used for permutation in Parvin's cryptosystem are zeros, and that $U$ and $V$ in Yang's cryptosystem are both given by the identity permutation. Note that a similar approach, with a general quantitative plaintext attack on permutation-only ciphers can be found in \cite{li2008general}.

In the diffusion mechanism proposed by Parvin we will show that the problem of finding the key stream $K$ used in the diffusion scheme with a KP attack is equivalent to solve the DEA $(\alpha \dotplus k) \oplus (\beta \dotplus k) = y$, where $\alpha, \beta, y$ are known parameters and $k$ is unknown. Note that the same DEA, under the assumption that $\alpha$ and $\beta$ can be freely chosen, have already been analyzed by other works, that are also briefly reviewed. We will also show that also the problem of retrieving the key stream for diffusion in Norouzi and Yang's design under CP attack scenario is equivalent to solve this DEA.
In addition, we will also investigate the security level of the diffusion approach proposed by Norouzi and Yang with respect to a KP attack.

\subsection{Parvin's diffusion scheme}
\label{subsec:4.1}
In Parvin's scheme, we assume that two plain-images, $P_1$ and $P_2$, and their corresponding cipher-images, $C_1$ and $C_2$, are available.
Referring to Eq.~(\ref{eq:diffusion}), we have
\begin{equation}
\left\{
\begin{IEEEeqnarraybox}[][c]{rCl}
\IEEEstrut
 c_1(l) &=& p_1(l)  \oplus (c_1(l-1) \dotplus k(l) ) \oplus k(l) \nonumber \\
 c_2(l) &=& p_2(l)  \oplus (c_2(l-1) \dotplus k(l) ) \oplus k(l), \nonumber
\IEEEstrut
\end{IEEEeqnarraybox}
\right.
\end{equation}
where $l \in [1, L]$. Their difference can be calculated as
\begin{equation}
(c_1(l-1) \dotplus k(l) ) \oplus (c_2(l-1) \dotplus k(l))   = c_1(l) \oplus   c_2(l)  \oplus p_1(l) \oplus p_2(l) .
\label{eq:deduction}
\end{equation}
More generally, we can recast this expression by observing that for any value of $l$ we have 
\begin{equation}
(\alpha \dotplus k) \oplus (\beta \dotplus k) = y.
\label{eq:model1}
\end{equation}
In the present context, the problem of finding the key stream $\{k(l)\}_{l=1}^{L}$ of Parvin's cryptosystem
is turned into solving 
Eq.~(\ref{eq:model1}) under some pairs of known parameters $(\alpha, \beta, y)$. Note that $k(0)$,
and so the full strem $K$, can be easily calculated according to Eq.~(\ref{eq:diffusion}) after $\{k(l)\}_{l=1}^{L}$
are revealed.

It is already known that, under the assumption that $\alpha$ and $\beta$ can be chosen freely, $k$
can be determined by only two groups of chosen queries by referring to the following
\begin{Theorem}\cite[Proposition 3 and Corallary 3.1]{li2013breaking}
Suppose $\alpha, \beta, k, y \in \mathbb{Z}_2^n$ and $n>2$, two groups of chosen queries $(\alpha, \beta)$ and their corresponding
$y$ are sufficient to determine $k$ of the following equation
\begin{equation*}
(\alpha \dotplus k) \oplus (\beta \dotplus k) = y
\end{equation*}
in terms of modulo $2^{n-1}$. Specifically the two chosen queries can be $(\hat{\alpha}, \hat{\beta})= (\sum\nolimits_{j=0}^{\lceil n/2 \rceil -1} (00)_2\cdot 4^j), \sum\nolimits_{j=0}^{\lceil n/2 \rceil -1} (10)_2\cdot 4^j)$
and $(\bar{\alpha}, \bar{\beta})= (\sum\nolimits_{j=0}^{\lceil n/2 \rceil -1} (10)_2\cdot 4^j, \sum\nolimits_{j=0}^{\lceil n/2 \rceil -1} (01)_2\cdot 4^j)$.
\end{Theorem}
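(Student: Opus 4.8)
The plan is to reduce the problem to reading off, from the two observed outputs, the carry chains produced when $k$ is added to each of the two prescribed masks, and then to peel off $k$ one bit at a time from those chains. First I would record the bit-level form of modular addition: for $a,k\in\mathbb{Z}_2^n$ one has $(a\dotplus k)_i=a_i\oplus k_i\oplus\gamma_i(a,k)$, where the carry chain obeys $\gamma_0(a,k)=0$ and $\gamma_{i+1}(a,k)=a_ik_i\vee a_i\gamma_i(a,k)\vee k_i\gamma_i(a,k)$, so in particular $\gamma_i(0,k)=0$ for all $i$. Substituting into $(\alpha\dotplus k)\oplus(\beta\dotplus k)=y$ and comparing bit positions yields $\gamma_i(\alpha,k)\oplus\gamma_i(\beta,k)=y_i\oplus\alpha_i\oplus\beta_i$ for $0\le i\le n-1$; a single query therefore reveals only the bitwise XOR of the two carry chains, not the chains individually.

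The two prescribed queries are chosen exactly to break this ambiguity; write $y^{(1)}$ and $y^{(2)}$ for the corresponding outputs. Since $\hat\alpha=0$ we have $\gamma_i(\hat\alpha,k)=0$, so the carry chain $u_i:=\gamma_i(\hat\beta,k)$ of $k\dotplus\hat\beta$ is obtained directly as $u_i=y^{(1)}_i\oplus\hat\beta_i$. The essential observation for the second query is that its left operand $\bar\alpha$ coincides with $\hat\beta$, hence $\gamma_i(\bar\alpha,k)=u_i$ is already known and the carry chain $w_i:=\gamma_i(\bar\beta,k)$ of $k\dotplus\bar\beta$ can then be solved for from $y^{(2)}$. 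Next I would substitute the explicit bit patterns --- $\hat\beta$ has ones in the odd positions and $\bar\beta$ in the even positions --- into the carry recursion, obtaining, for every $j$ with all indices in range,
\begin{align*}
u_{2j+1} &= k_{2j}\,u_{2j}, & u_{2j+2} &= k_{2j+1}\vee u_{2j+1},\\
w_{2j+1} &= k_{2j}\vee w_{2j}, & w_{2j+2} &= k_{2j+1}\,w_{2j+1}.
\end{align*}

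With every $u_i$ and $w_i$ for $0\le i\le n-1$ now available, each bit of $k$ below the most significant one can be pinned down by a local rule: take $k_{2j}=u_{2j+1}$ when $u_{2j}=1$ and $k_{2j}=w_{2j+1}$ when $u_{2j}=0$; take $k_{2j+1}=u_{2j+2}$ when $u_{2j+1}=0$ and $k_{2j+1}=w_{2j+2}$ when $u_{2j+1}=1$. What makes these rules valid are the two compatibility facts $u_{2j}=0\Rightarrow w_{2j}=0$ and $u_{2j+1}=1\Rightarrow w_{2j+1}=1$ (each drops out of the recursions above in one line --- for instance $u_{2j}=0$ forces $k_{2j-1}=0$, hence $w_{2j}=0$ --- with the boundary index $j=0$ handled by $u_0=w_0=0$). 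This recovers $k_0,\dots,k_{n-2}$, that is, $k$ modulo $2^{n-1}$; the bit $k_{n-1}$ enters the equation only through carries of index $\ge n$, which are never observed, so it is genuinely undetermined and the statement cannot say more. Since the two queries are prescribed, the argument is largely a mechanical verification; the point that genuinely needs care is checking the two compatibility facts, as they are precisely what rules out the degenerate case in which both candidate rules for some bit of $k$ are vacuous, and they rely essentially on the interleaving of the chosen masks together with the sharing $\hat\beta=\bar\alpha$ that made $w_i$ recoverable in the first place.
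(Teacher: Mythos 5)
Your proof is correct, but it follows a genuinely different route from the paper's. The paper does not actually prove Theorem~1 directly: it defers to the cited references and instead observes that Theorem~1 follows from its Theorem~2, via the remark that the two prescribed queries satisfy $\hat{y} \| \bar{y} = 2^n-1$, so every bit position $i \le n-2$ has $y_i=1$ in at least one of the two outputs; the appendix then shows by induction on bit planes (Table~2 of the paper) that $k_i$ is determined whenever $y_i=1$ and the carries $c_i,\tilde{c}_i$ are known from the lower bits already recovered. Your argument instead exploits the specific algebraic structure of the two queries --- $\hat\alpha=0$ makes the first carry chain $u_i$ directly readable from $y^{(1)}$, and $\bar\alpha=\hat\beta$ then makes the second chain $w_i$ readable from $y^{(2)}$ --- after which each bit of $k$ is recovered by a purely local rule from the interleaved recursions, with the two compatibility implications ($u_{2j}=0\Rightarrow w_{2j}=0$ and $u_{2j+1}=1\Rightarrow w_{2j+1}=1$) correctly identified and verified as the crux. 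The trade-off is clear: the paper's route is more general (it yields a criterion, $y_i=1$ with known carry, that also drives the known-plaintext algorithm of Sec.~5.1, where $\alpha,\beta$ cannot be chosen), whereas yours is self-contained, avoids the induction entirely because all carries are known up front rather than reconstructed from previously determined key bits, and makes transparent exactly why these particular masks work. Both correctly account for the indeterminacy of $k_{n-1}$, attributing it to the discarded carry out of the top bit plane.
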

The proof of Theorem~1 can be found in \cite{li2011breaking,li2013breaking}, and an interpretation from the
computational point of view about this theorem can be found in \cite{Zhang2015breaking}. It is worth mentioning
that the most significant bit (MSB) of $k$, i.e., $k_{n-1}$,
cannot be determined even with additional queries of $(\alpha, \beta)$.
This is intrinsic in the fact that the carry bit generated by the highest bit plane is discarded after
the modulo operation \cite{Zhang2015breaking}. Consequently, both $k$ and $\hat{k} = k \oplus 2^{n-1}$ are two
equivalent solutions of the considered equation. For this reason, in the following we consider only the
problem of determining the $(n-1)$ least significant bits (LSBs) of $k$ in Eq.~(\ref{eq:model1}).

Note however that, by referring to Eq.~(\ref{eq:deduction}), neither a KP nor a CP attack scenario allows us to choose the value
of $\alpha$ and $\beta$ since they represent ciphertext elements. In order get a result similar to that of Theorem~1 that can be applied to the considered cryptosystems, we systematically analyze Eq.~(\ref{eq:model1})
 in Sec.~\ref{subsec:5.1} under the assumption that $\alpha$ and $\beta$ are known to the attacker but cannot be freely chosen.

\subsection{Norouzi and Yang's diffusion scheme}
In Norouzi's and Yang's cryptosystems, the diffusion stage is characterized by Eq.~(\ref{eq:diffusion2}), where some
computational-intensive operations are added to the XOR and modulo addition. Regardless of their computational efficiency,
we are curious whether this new diffusion mechanism will improve the security of the resultant cryptosystem. Given a plain-image
$P_1=\{p_1(l)\}_{l=1}^{L}$, we define the real number sequence $T_1=\{t_1(l)\}_{l=1}^{L}$ as
\begin{equation}
t_1(l) = \sum\nolimits_{i=l+1}^{L} p_1(i)/ 256^4.
\label{eq:tempvalue}
\end{equation}
Then, the diffusion scheme characterized by Eq.~(\ref{eq:diffusion2}) can be written as
  \begin{equation}
        c_1(l) = p_1(l)  \oplus (c_1(l-1) \dotplus k(l) ) \oplus g(t_1(l), k(l)),
        \label{eq:multiply}
  \end{equation}
where $g(t_1(l), k(l)) = \lfloor t_1(l) \cdot (10^8 \cdot k(l)) \rfloor \bmod 256$.
Under a CP attack scenario, an adversary can choose another plain-image $P_2$, which differs from $P_1$ by a single pixel at location $l_0$. In this way the real number sequence $T_2=\{t_2(l)\}_{l=1}^{L}$ associated to $P_2$ satisfies
\begin{equation*}
t_2(l) = t_1 (l) \textit{~~~if~} l \geq l_0.
\end{equation*}
Referring to Eq.~(\ref{eq:multiply}), it is easy to observe that the difference between
$C_1$ and $C_2$ at location $l_0$ will satisfy
\begin{IEEEeqnarray}{rCl}
c_1(l_0) \oplus c_2(l_0) \oplus  p_1(l_0) \oplus  p_2(l_0) &=&  (c_1(l_0-1) \dotplus k(l_0) ) \oplus g(t_1(l_0), k(l_0)) \nonumber\\
&&  \oplus(c_2(l_0-1) \dotplus k(l_0) ) \oplus g(t_2(l_0), k(l_0)) \nonumber \\
  &=& (c_1(l_0-1) \dotplus k(l_0) ) \oplus (c_2(l_0-1) \dotplus k(l_0) ),\nonumber
\end{IEEEeqnarray}
which coincides exactly with Eq.~(\ref{eq:model1}). In conclusion, under the CP attack scenario, the problem of finding
the equivalent secret key stream for diffusion of Norouzi and Yang's designs is converted into solving Eq.~(\ref{eq:model1}) with
some pairs of known parameters $(\alpha, \beta, y)$.

Conversely, under the assumption of a KP attack scenario, we can observe from Eq.~(\ref{eq:tempvalue}) that the calculation of the real number sequence $T$ is independent of the secret key (stream). Then, limiting ourselves to consider the plain
image $P_1$, we can recast Eq.~(\ref{eq:multiply}) as
\begin{equation}
 (\alpha \dotplus k) \oplus g(\beta, k)=y ,
\label{eq:model2}
\end{equation}
where $g(\beta, k) = \lfloor \beta \cdot (10^8 \cdot k) \rfloor \bmod 256$ is a nonlinear function. The problem of determining
$k$ for Eq.~(\ref{eq:model2}) from some groups of known $(\alpha, \beta, y)$ is considered in Sec.~\ref{subsec:5.2}.
Here, special attention should be paid to the fact that $\beta$ is no longer $8$-bit integer but a non-negative real number.

\section{Main results}
\label{sec:mainresult}
\subsection{Cryptographic strength of the equation $(\alpha \dotplus k) \oplus (\beta \dotplus k) = y$}
\label{subsec:5.1}
According to Sec.~\ref{subsec:4.1}, both KP and CP attacks to Parvin¡¯s diffusion scheme are equivalent to solve Eq.~(\ref{eq:model1}) under the assumption that the value of $\alpha$, $\beta$ and $y$ are known but none of them can be chosen.
In the ideal case, the data complexity for to determine $k$ should be $2^{2n}$ because there are $2^{2n}$ possible combinations of
$\alpha$ and $\beta$ in total. However, we can theoretically show (and we will confirm this with simulation results) that the actual complexity substantially deviates from the ideal one.

Let us assume that an adversary successfully collects a set of known triples $(\alpha, \beta, y)$ and denote this set by
\begin{equation*}
\mathbb{G} = \{(\alpha, \beta, y) \mid y = (\alpha \dotplus k) \oplus (\beta \dotplus k)\}
\end{equation*}
with $\# \mathbb{G}=g$.
The candidate solutions of $k$ given $\mathbb{G}$ can be computed by means of a brute-force search according to the following algorithm whose computational complexity is $O(2^{n-1}\cdot g)$.
\begin{itemize}
\item{Step~(1)} Let $l=1$ and the solution set $\mathbb{K}_l = \emptyset$.
\item{Step~(2)} Select the $l$-th element of $\mathbb{G}$ and exhaustively test all the $2^{n-1}$ possible values of $k$
(the MSB of $k$ is ignored here) to check whether it satisfies Eq.~(\ref{eq:model1}). Collect all the possible values of $k$
that meet the requirement and denote them as $\mathbb{K}_l$.
\item{Step~(3)} Set $l=l+1$ if $l< g$. Go to Step~(2) and update the solution set by $\mathbb{K}_{l+1} = \mathbb{K}_{l+1} \cap \mathbb{K}_{l}$.
\end{itemize}

This algorithm ends up with a solution set $\mathbb{K}_g$ which contains all the possible values of $k$ that are
consistent with the known parameter set $\mathbb{G}$. Nevertheless, it is concluded that the computational complexity
is $O(2^{n-1}\cdot g)$ steps. Nevertheless, this algorithm has two shortcomings:
1) there is no hint on how to choose the correct $k$ from $\mathbb{K}_g$ if $\# \mathbb{K}_g \geq 2$;
2) the efficiency is not satisfactory when $n$ is large.
In the case of Parvin's cryptosystem, $n$ is fixed to $8$, and this makes this algorithm working pretty well.
However, in the scheme proposed in \cite{CH:HCKBA:IJBC10,Rao:ModifiedCKBA:ICDSP07}, where $n=32$, this
algorithm becomes inefficient. These two questions are solved on the basis of Theorem~2, where the sufficient condition
to determine the bit plane of $k$ is given.
\begin{Theorem}
Suppose $\alpha, \beta, k, y \in \mathbb{Z}_2^n$ and $n\geq 2$.
Given $\alpha, \beta$ and $y$, the $i$ least significant bits $(0 \leq i < n-1)$ of $k$ of the following equation
\begin{equation*}
(\alpha \dotplus k) \oplus (\beta \dotplus k) = y
\end{equation*}
can be solely determined if $y=\sum\nolimits_{j=0}^{i-1} 2^j = {\overbrace{(0\dots0  \underbrace{1\cdots11}_{i}  )}^{MSB \leftarrow LSB} }\,_2$.
\end{Theorem}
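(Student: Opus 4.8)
The plan is to analyze Eq.~(\ref{eq:model1}) bitwise, by following the carry chains of the two modulo additions. For $a\in\{\alpha,\beta\}$, let $c^{a}_{0}=0$ and $c^{a}_{j+1}=\mathrm{MAJ}(a_{j},k_{j},c^{a}_{j})=a_{j}k_{j}\oplus a_{j}c^{a}_{j}\oplus k_{j}c^{a}_{j}$ denote the carry into bit position $j+1$ when forming $a\dotplus k$, so that $(a\dotplus k)_{j}=a_{j}\oplus k_{j}\oplus c^{a}_{j}$. XOR-ing the two sums, the common bit $k_{j}$ cancels and we get, for every $j\in\{0,\dots,n-1\}$,
\begin{equation*}
y_{j}=\alpha_{j}\oplus\beta_{j}\oplus d_{j},\qquad d_{j}:=c^{\alpha}_{j}\oplus c^{\beta}_{j},\qquad d_{0}=0 .
\end{equation*}
Hence the attacker knows every $d_{j}=y_{j}\oplus\alpha_{j}\oplus\beta_{j}$ from the data $(\alpha,\beta,y)$, and the whole task becomes: recover the key bits $k_{j}$ from this carry-difference sequence.

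First I would establish the local fact that is the heart of the statement: \emph{if the carries $c^{\alpha}_{j}$ and $c^{\beta}_{j}$ are already known, then the bit $k_{j}$ is uniquely pinned down by the data if and only if $y_{j}=1$}. This is a four-way case analysis on $(\alpha_{j},\beta_{j})$, substituting $\mathrm{MAJ}(0,x,z)=xz$ and $\mathrm{MAJ}(1,x,z)=x\|z=x\oplus z\oplus xz$ into the recursion for $d_{j+1}=c^{\alpha}_{j+1}\oplus c^{\beta}_{j+1}$: for $\alpha_{j}=\beta_{j}$ one gets $d_{j+1}=d_{j}k_{j}$ (both bits $0$) or $d_{j+1}=d_{j}\oplus(d_{j}k_{j})$ (both bits $1$), which is injective in $k_{j}$ exactly when $d_{j}=1$; for $\alpha_{j}\neq\beta_{j}$ one gets, up to swapping the roles of $\alpha$ and $\beta$, $d_{j+1}=c^{\alpha}_{j}$ when $k_{j}=0$ and $d_{j+1}=1\oplus c^{\beta}_{j}$ when $k_{j}=1$, which separates the two candidates for $k_{j}$ exactly when $c^{\alpha}_{j}=c^{\beta}_{j}$, i.e.\ when $d_{j}=0$. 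In both regimes the condition collapses to $\alpha_{j}\oplus\beta_{j}\oplus d_{j}=1$, that is $y_{j}=1$; and when it holds, reading off $d_{j+1}=y_{j+1}\oplus\alpha_{j+1}\oplus\beta_{j+1}$ and inverting the relevant map returns $k_{j}$ explicitly. (Conversely $y_{j}=0$ leaves $k_{j}$ undetermined, which is precisely why the all-ones prefix prescribed for $y$ is exactly what is needed.)

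The theorem then follows by a short induction on $j=0,1,\dots,i-1$. For $y=\sum_{j=0}^{i-1}2^{j}$ we have $y_{0}=\dots=y_{i-1}=1$. Base case: $c^{\alpha}_{0}=c^{\beta}_{0}=0$ are known and $y_{0}=1$, so the local fact yields $k_{0}$. Inductive step: once $k_{0},\dots,k_{j-1}$ are fixed, the carries $c^{\alpha}_{j},c^{\beta}_{j}$ are computable from them together with $\alpha,\beta$, and $y_{j}=1$, so the local fact yields $k_{j}$; this step only consumes the bit $y_{j+1}$ with $j+1\le i<n-1<n$, hence a genuine bit of $y$, and it never touches the key's MSB $k_{n-1}$, consistently with the remark following Theorem~1. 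Running the induction up to $j=i-1$ determines $k_{0},\dots,k_{i-1}$, i.e.\ the $i$ LSBs of $k$, as claimed.

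The step I expect to be the real obstacle is the local fact: one must run the case split on $(\alpha_{j},\beta_{j})$ with care, verify that the ``$k_{j}$ is recoverable'' criterion collapses \emph{uniformly} to the single clean condition $y_{j}=1$, and keep precise track of which of $c^{\alpha}_{j}$, $c^{\beta}_{j}$ enters each case, since that is what makes the induction's bookkeeping (the carries being available before $k_{j}$ is) actually close. Everything else is routine bit arithmetic.
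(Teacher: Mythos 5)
Your proposal is correct and follows essentially the same route as the paper's own proof: both reduce the equation to the carry-difference sequence $d_j=y_j\oplus\alpha_j\oplus\beta_j$ (the paper's $\tilde{y}_j=c_j\oplus\tilde{c}_j$), establish by the same case analysis on $(\alpha_j,\beta_j)$ that $k_j$ is recoverable from $d_{j+1}$ exactly when $y_j=1$, and close with the same induction that propagates the known carries. Your explicit inversion formulas agree with the paper's $k_{m+1}=\tilde{y}_{m+2}\oplus\alpha_{m+1}c_{m+1}\oplus\beta_{m+1}\tilde{c}_{m+1}$, so nothing is missing.
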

\begin{proof}
The proof of this theorem can be found in \ref{sec:proof}.
\end{proof}

For a given known parameter triple $(\alpha, \beta, y)$, Theorem~2 states that some least significant bits of $k$ can be
confirmed when consecutive ones are observed at the LSBs of $y$.
A more surprising inference drawn from Theorem~2 is that Eq.~(\ref{eq:model1}) can be solved using only a single query
$(\alpha, \beta)$ when the adversary obtains the oracle machine outputs ($2^n-1$) or ($2^{n-1}-1$).

Furthermore, it is also easy to conclude that the result given by Theorem~1 is just a special case of that
by Theorem~2. In detail, for the two chosen queries used in Theorem 1, we have
\begin{IEEEeqnarray}{rCl}
\hat{y} \| \bar{y} &=& (\hat{\alpha} \dotplus k) \oplus (\hat{\beta} \dotplus k) \|  (\bar{\alpha} \dotplus k) \oplus (\bar{\beta} \dotplus k) \nonumber\\
                     &=& 2^n -1 \nonumber
\end{IEEEeqnarray}
and we can also indicate other two groups of queries satisfying the requirements of Theorem 1, specifically
$(\tilde{\alpha}, \tilde{\beta})=(\sum\nolimits_{j=0}^{\lceil n/2 \rceil -1} (10)_2\cdot 4^j), \sum\nolimits_{j=0}^{\lceil n/2 \rceil -1} (00)_2\cdot 4^j)$
and $(\check{\alpha}, \check{\beta})=(\sum\nolimits_{j=0}^{\lceil n/2 \rceil -1} (00)_2\cdot 4^j, \sum\nolimits_{j=0}^{\lceil n/2 \rceil -1} (01)_2\cdot 4^j)$.
Based on Theorem~2, we propose the following efficient algorithm to get a candidate solution of $k$ from the known parameters
set $\mathbb{G}$, with $\#\mathbb{G} =g$.
\begin{itemize}
\item{Step~(1)} Generate parameter sets $\mathbb{G}_{j}  \subseteq \mathbb{G}$ using the following rule
\begin{equation*}
\mathbb{G}_j = \{(\alpha, \beta, y) \mid y = (\alpha \dotplus k) \oplus (\beta \dotplus k),~y_j = 1\},
\end{equation*}
where $j = 0 \sim n-2$. 
\item{Step~(2)} Let $i=0$, $c_0 = 0$ and set the default value of $k$ to a random number in $[0, 2^n-1]$.
\item{Step~(3)}  Refresh the $i$-th bit $k_i$ by look up Table~\ref{table:carrybits} if $\# \mathbb{G}_i\neq 0$
and then calculate $c_{i+1}$ by Eq.~(\ref{eq:bitform}).
\item{Step~(4)} If $i<n-2$, increase $i$ by $1$. Go to Step~(3) if $\# \mathbb{G}_i\neq 0$.
\item{Step~(5)} Calculate $k$ using the equation $k = \sum\nolimits_{i=0}^{n-1}k_i \cdot 2^i$.
\end{itemize}

\begin{table}[htp!]
\centering
\caption{The values of $k_i$ corresponding to the values of $\alpha_i, \beta_i, c_i, y_{i}$, and $\tilde{y}_{i+1}$.}
\centering
{\begin{tabular}{*{8}{c}c}
\toprule
\multirow{2}{*}{$(y_{i}, \tilde{y}_{i+1})$}     & \multicolumn{8}{c}{$(\alpha_i, \beta_i, c_{i} )$} \\
\cline{2-9} & $(0,0,0)$ & $(1,0,0)$ & $(0,1,0)$ & $(0,0,1)$ & $(1,1,0)$ & $(1,0,1)$ & $(0,1,1)$ & $(1,1,1)$\\\midrule
(0, 0)      &   0, 1    &   0, 1    &   -    &   0, 1    &   0, 1    &   -       &  0, 1 &   0, 1    \\
(0, 1)      &   -       &     -     &   0, 1    &   -       &   -       &   0, 1    &  -    &   -    \\  \midrule
(1, 0)      &   0       &     0     &   0       &   0       &   1       &   1       &  1    &   1    \\
(1, 1)      &   1       &     1     &   1       &   1       &   0       &   0       &  0    &   0    \\
\bottomrule
\end{tabular}}
\label{table:carrybits}
\end{table}

The complexity of the above steps is mainly introduced by Step~(1), which involves the exploration of
all the first $(n-1)$ bit planes of $y$ in $\mathbb{G}$ to obtain $\mathbb{G}_{j}$.
It can be inferred that the computational complexity is only $O((n-1)\cdot g)$, which is much smaller than the
complexity of the previous algorithm $O(2^{n-1}\cdot g)$. Besides, this algorithm generates only a
single possible candidate $k$, thus avoiding the problem of selecting $k$ from its candidate
set\footnote{In fact, every element in $\mathbb{K}_g$ contains the same number of correct bits of
$k$ in average.} $\mathbb{K}_g$. Without loss of generality, assume that all the known parameters
$\alpha$, $\beta$ and $y$ are uniformly distributed in the interval $[0, 2^{n-1}]$.
Finally, the probability that the first $i$ ($0\leq i < n-1$) LSBs can be confirmed by
$\mathbb{G}$, denoted as $\Prob(k_{0\sim i}\mid \mathbb{G})$, is given as
\begin{equation*}
\Prob(k_{0\sim i}\mid \mathbb{G}) = \left( 1 - \left(\frac{1}{2} \right)^g \right)^{i+1}.
\end{equation*}

Assuming $n=8$ as in the three image cryptosystems studied in Sec.~\ref{sec:threeciphers},
we depict in Fig.~\ref{fig:probability} this probability with respect to different values of $g$.
As we can observe from this figure, the probability is relative high for small $i$ when $g$ equals $3$.
This result is further verified by carrying out experiments to Parvin's cryptosystem
under the assumption that the key streams $K$ is generated using the key schedule described in
\cite[Sec.~2]{parvin2014new} while
we artificially set $U$ and $V$ to zeros to fit our model proposed in Sec.~\ref{subsec:4.1}. Then, we use $2$ and $4$ known
plain-images and their corresponding cipher-images, i.e, $g=1$ and $g=3$, to recover the key stream
$K$ using the algorithm described above. The recovered key stream is used to decrypt the cipher-image of ``Baboon",
as shown in Fig.~\ref{fig:recover}b), and the deciphered results are shown
respectively in Fig.~\ref{fig:recover}c) and Fig.~\ref{fig:recover}d).


\begin{figure}[htbp]
\centering
\includegraphics[width=3.5in]{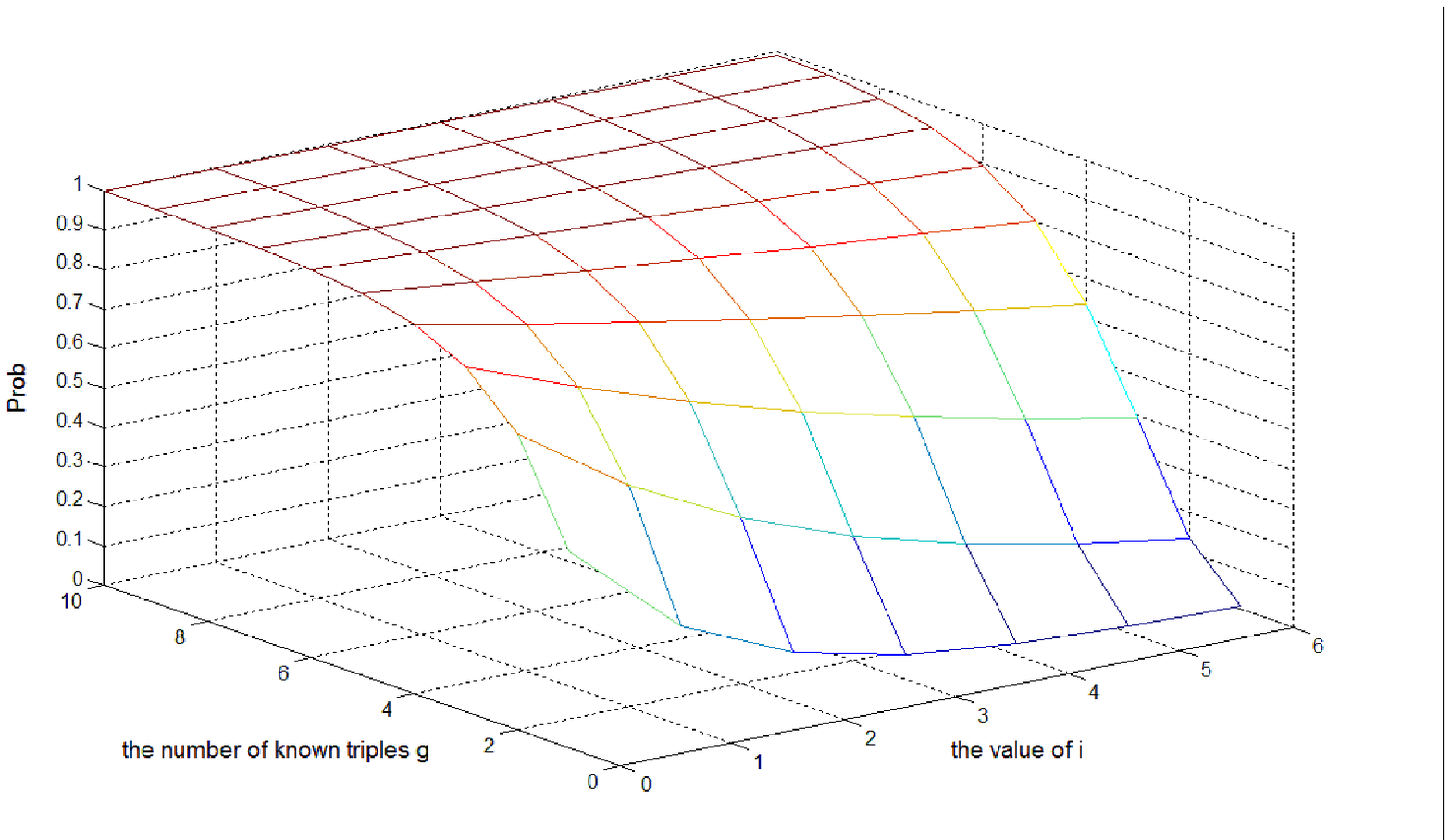}
\caption{The probability that the first $i$ LSBs of $k$ can be confirmed with respect to different $g$.}
\label{fig:probability}
\end{figure}


\begin{figure}[!htb]
\centering
\begin{minipage}{\figwidth}
\centering
\includegraphics[width=\textwidth]{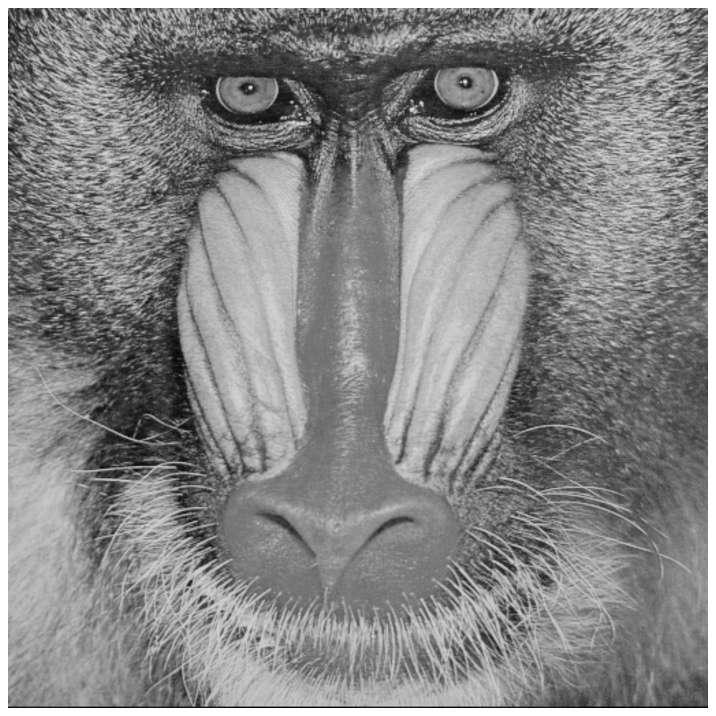}
a)
\end{minipage}
\begin{minipage}{\figwidth}
\centering
\includegraphics[width=\textwidth]{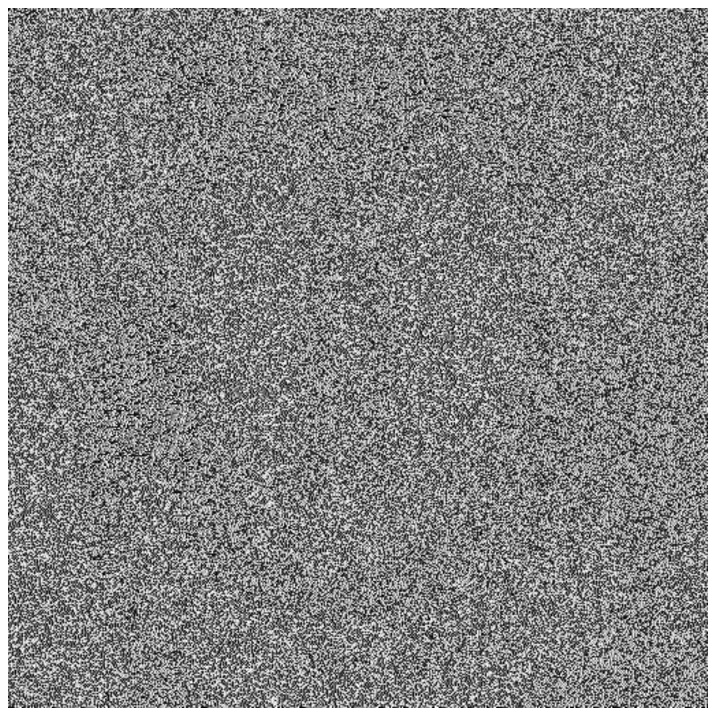}
b)
\end{minipage}
\begin{minipage}{\figwidth}
\centering
\includegraphics[width=\textwidth]{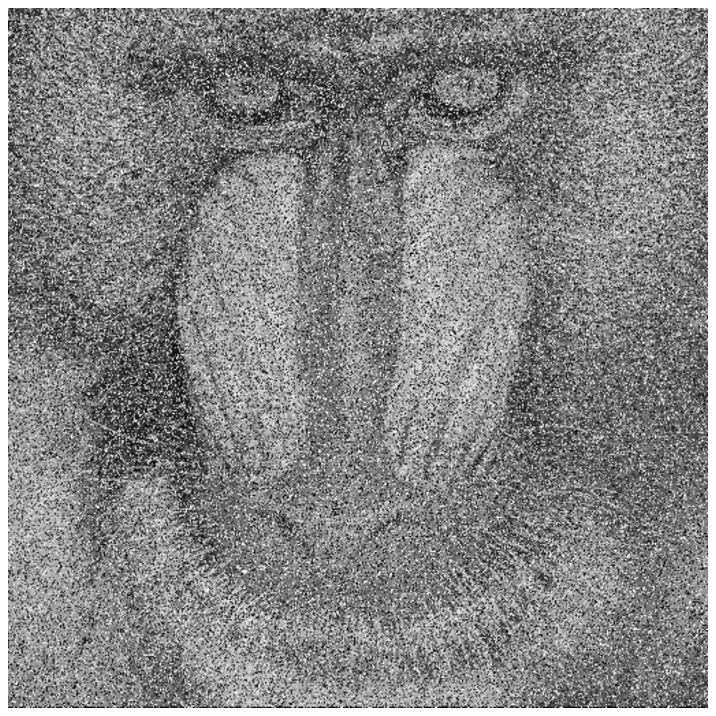}
c)
\end{minipage}
\begin{minipage}{\figwidth}
\centering
\includegraphics[width=\textwidth]{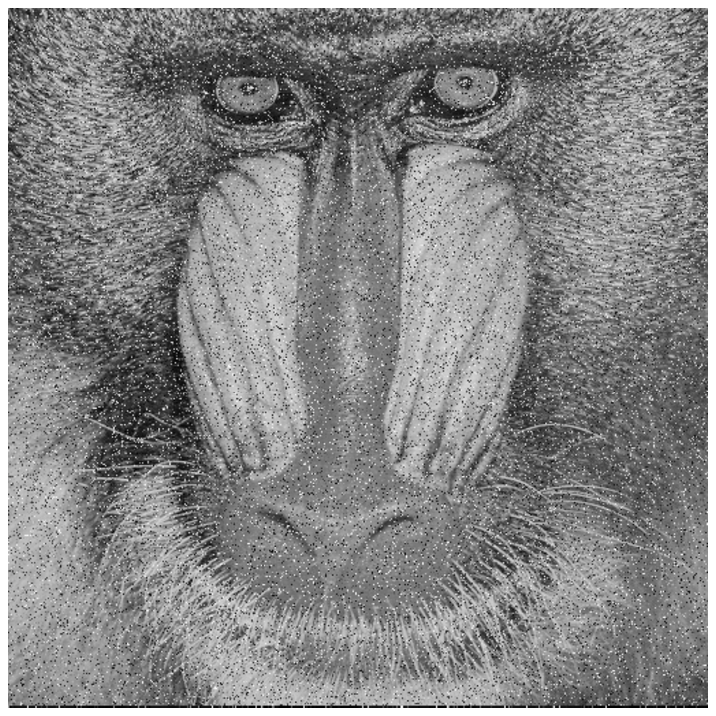}
d)
\end{minipage}
\caption{Numerical tests on simplified Parvin's cryptosystem:
a)~Plain-image ``Baboon'' of size $512 \times 512$;
b)~Encryption result of Fig~\ref{fig:recover}a) using the modified Parvin's cryptosystem;
c)~Recovery result using $2$ pairs of known plain-images and their corresponding cipher-images;
d)~Recovery result using $4$ pairs of known plain-images and their corresponding cipher-images.}
\label{fig:recover}
\end{figure}

\subsection{Cryptographic strength of the equation $(\alpha \dotplus k) \oplus g(\beta, k)=y$}
\label{subsec:5.2}
Accordingly to the results obtained in the previous section, the diffusion mechanism characterized by Eq.~(\ref{eq:model1}) is weak with respect to both
CP and KP attacks. Specifically, two groups of chosen parameters are enough to uniquely determine $k$, while a few groups
of known parameters are sufficient to determine $k$ with overwhelming probability. The bidirectional diffusion scheme
introduced in \cite{norouzi2014simple} and in \cite{yang2015novel}, and defined by Eqs.~(\ref{eq:diffusion2}) and (\ref{eq:prediffusion2}), is suggested as a workaround.
The idea of the new design is that all the pixels located after the current one are used in the diffusion process, with an avalanche effect (and so, an improvement) in the encryption of plain-images.

In the context of a CP attack scenario, thanks to the results shown in Sec.~\ref{sec:problem}, the birectional
diffusion scheme is immediately proven to be weak, since Eq.~(\ref{eq:diffusion2}) can be converted to the form of Eq.~(\ref{eq:model1}).
Considering that there are $L$ pixels in an image, the data complexity (i.e., required number of
plain-images and cipher-images) for breaking the cipher in \cite{norouzi2014simple} is only $O(L)$.

Furthermore, we can show that in the context of a KP attack scenario,
the data complexity for breaking the cipher in  \cite{norouzi2014simple} is the same as above.
Let us consider the equation
\begin{equation*}
 (\alpha \dotplus k) \oplus g(\beta, k) = y,
\end{equation*}
where $g(\beta, k) = \lfloor \beta \cdot (10^8 \cdot k) \rfloor \bmod 256$, $\alpha, y, k \in [0,255]$ and $\beta$ is
a non-negative real number.
Under the assumptions of a KP attack, i.e., that $\alpha, \beta$ and $y$ are known to the adversary, we can show that the data complexity for revealing k is only $O(1)$.
In other words, the inefficient bidirectional diffusion scheme actually does not improve the security level
of Eq.~(\ref{eq:model1}) with respect to KP attack.

We start our analysis from the trivial case $\beta \equiv 0$. Under this assumption, Eq.~(\ref{eq:model2}) is simplified to
\begin{equation*}
y = \alpha \dotplus k
\end{equation*}
since $g(\beta, k) = \lfloor \beta \cdot (10^8 \cdot k) \rfloor \bmod 256 \equiv 0$. Thus, $k$ can be calculated as $k = y \dotminus \alpha$.
For the general case $\beta > 0$, it is easy to observe that the value of $g(\beta, k)$ is sensitive to the changes of $k$. In
other words, given $\alpha$, $\beta$ and $y$, the result of $(\alpha \dotplus k) \oplus g(\beta, k)$ will be different from $y$
with an overwhelming probability even if $k$ slightly deviates from its true value. For convenience, let
$\mathbb{G} =\{(\alpha, \beta, y) \mid y = (\alpha \dotplus k) \oplus g(\beta, k) \}$ and assume $ \# \mathbb{G}= g = O(1)$.
The following procedures describe a method to determine $k$ from $\mathbb{G}$ by using this observation.
\begin{itemize}
\item{Step~(1)} Let $l=1$ and the solution set $\mathbb{K}_l = \emptyset$.
\item{Step~(2)} Select the $l$-th element of $\mathbb{G}$ and exhaustively test all the $2^8$ possible values of $k$ to
    check whether it satisfies Eq.~(\ref{eq:model2}). Collect all the possible values of $k$ that meet the requirement
    and denote them as $\mathbb{K}_l$.
\item{Step~(3)} Go to Step~(5) if $\# \mathbb{K}_{l} =1$.
\item{Step~(4)} Set $l=l+1$ if $l< g$. Go to Step~(2) and update the solution set by $\mathbb{K}_{l+1} = \mathbb{K}_{l+1} \cap \mathbb{K}_{l}$.
\item{Step~(5)} Print the value of the single element of $\mathbb{K}_{l}$ if $\# \mathbb{K}_{l}=1$.
    Otherwise output $\# \mathbb{K}_{l}$.
\end{itemize}

We verify the validity of this algorithm by carrying out experiments to Norouzi's cryptosystem (that can be viewed
as the simplified version of Yang's design). Three $512 \times 512$ known plain-images with different statistical
characteristics are employed as our test images (Fig.~\ref{fig:decrypt2}a)-c)).
These images are encrypted using Norouzi's cryptosystem under the secret key that was adopted in \cite[Sec.~3]{norouzi2014simple}.
Using the techniques illustrated in Sec.~\ref{sec:problem}, we cast the relationship between the plaintext pixels and ciphertext pixels to
the form of Eq.~(\ref{eq:model2}). Then, we respectively use $1$, $2$ and $3$ pairs of plain-images and their corresponding cipher-images
to retrieve the equivalent secret key stream $K$ by the above algorithm.
\begin{figure}[!htb]
\centering
\begin{minipage}{\figwidth}
\centering
\includegraphics[width=\textwidth]{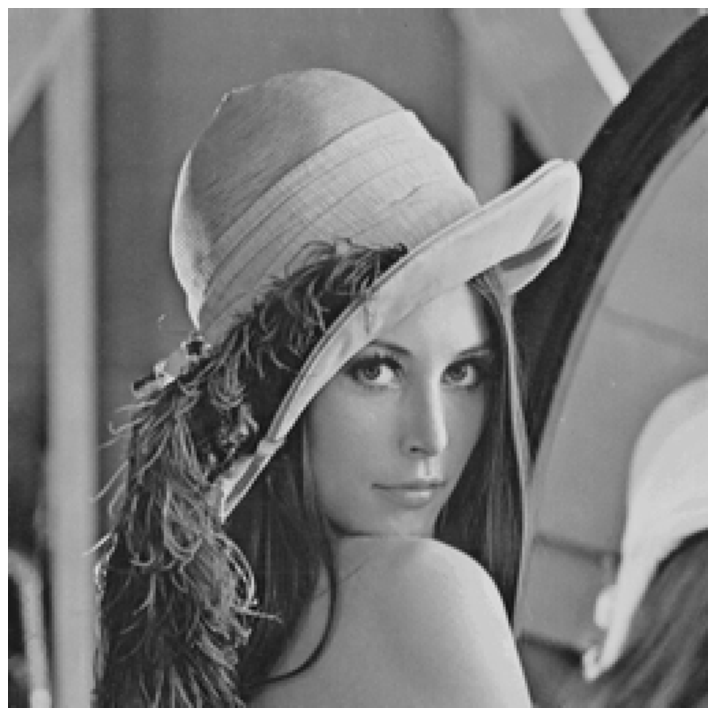}
a)
\end{minipage}
\begin{minipage}{\figwidth}
\centering
\includegraphics[width=\textwidth]{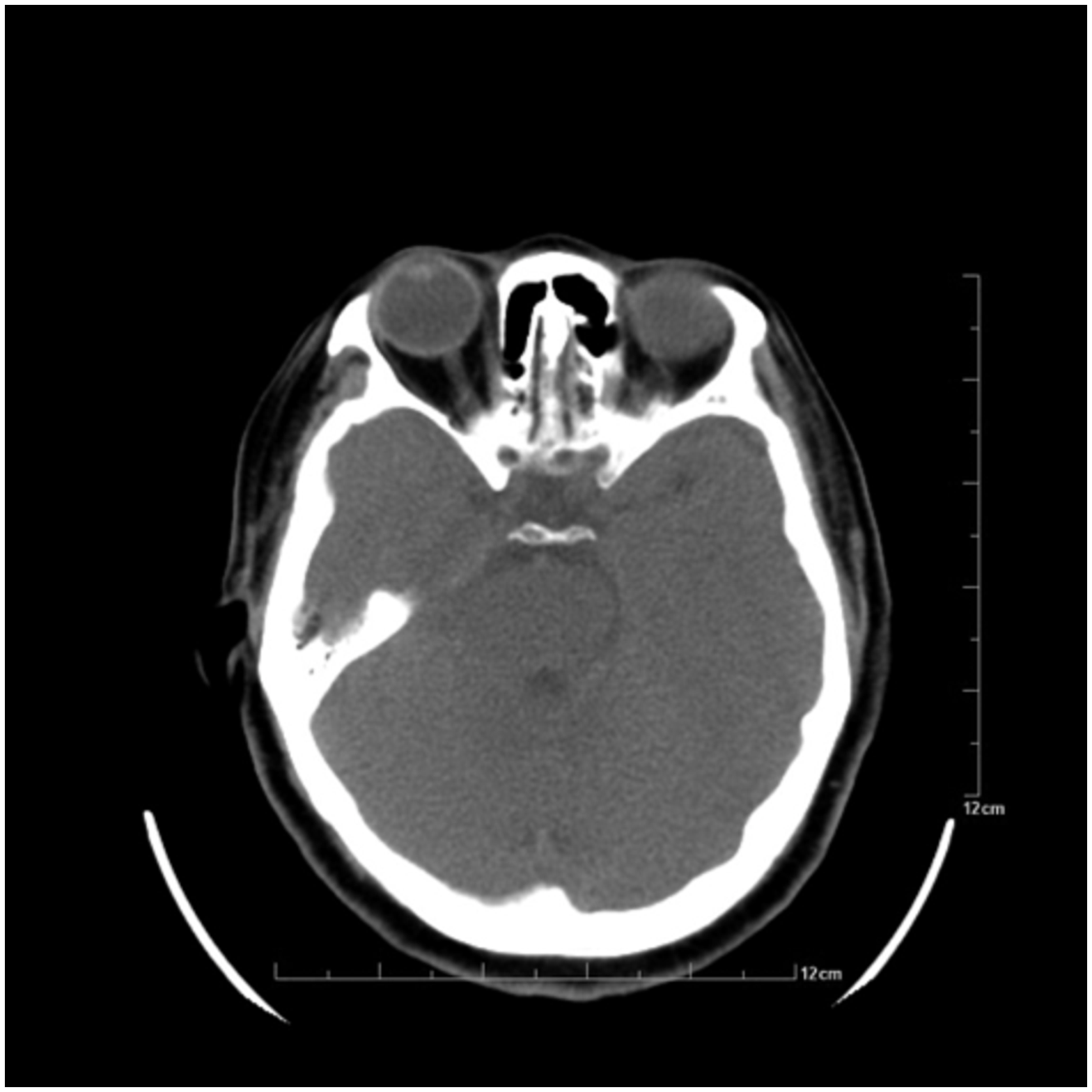}
b)
\end{minipage}
\begin{minipage}{\figwidth}
\centering
\includegraphics[width=\textwidth]{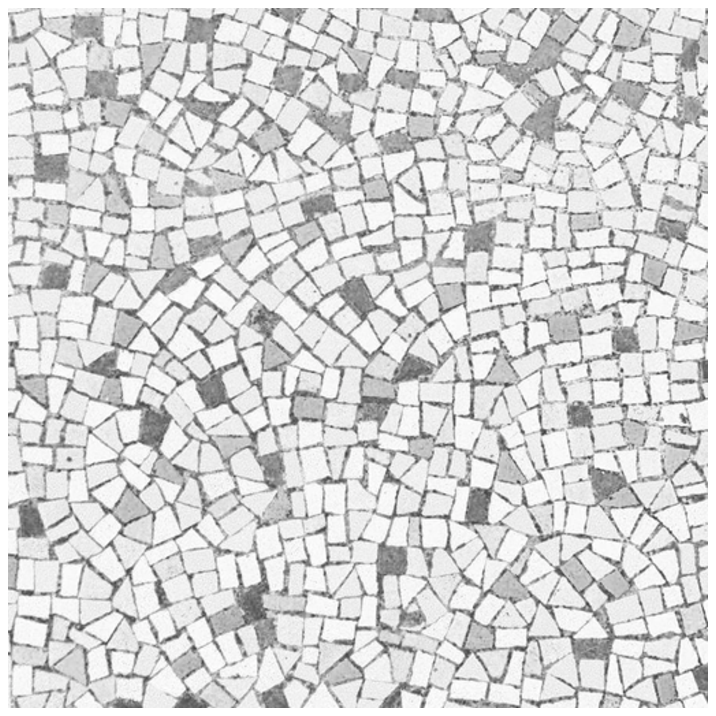}
c)
\end{minipage}
\caption{Three test images for recovering the equivalent key stream of Norouzi's cryptosystem:
a)~``Lena'';
b)~CT image;
c)~Mosaic image.}
\label{fig:decrypt2}
\end{figure}
The average \textit{recovery rates} of the proposed KP attack using different numbers of known plain-images are listed in Table~\ref{table:RR}.
Here, the \textit{recovery rate} is defined as
\begin{equation*}
\textit{recovery rate} = \frac{\textit{number of correctly recovered elements of~} K}{\textit{total number of elements in~} K} \times 100\%.
\end{equation*}
It can be observed that the average \textit{recovery rate} raises as the number of known plain-images increase.
Even the number of known plain-images is only $1$, the average \textit{recovery rate} is close to $67\%$. When
the number of known plain-images is $3$, the \textit{recovery rate} grows to $100\%$. Furthermore, we utilize these
recovered equivalent key streams to decrypt an intercepted cipher-image and the result is shown in Fig.~\ref{fig:decrypt3}a)-c).
From Fig.~\ref{fig:decrypt3}, it is concluded that $100\%$ \textit{recovery rate} of the key stream guarantees perfect
reconstruction of the intercepted cipher-image, while a high \textit{recovery rate} of the key stream does not lead to
good or acceptable visual quality. This phenomenon is attributable to the bidirectional diffusion property of Eq.~(\ref{eq:multiply}),
where the error of a wrongly decrypted pixel will spread to all successive decryption in a pseudo-random manner.

\begin{table}
\centering
\caption{Average recovery rate using different numbers of known plain-images.}
\centering
    \begin{tabular}{cc}\bottomrule
    Number of known plain-images & average \textit{recovery rate} \\ \midrule
    1                            & 66.6637\%                    \\
    2                            & 99.8247\%                    \\
    3                            & 100\%                    \\
    \bottomrule
    \end{tabular}
    \label{table:RR}
\end{table}

\begin{figure}[!htb]
\centering
\begin{minipage}{\figwidth}
\centering
\includegraphics[width=\textwidth]{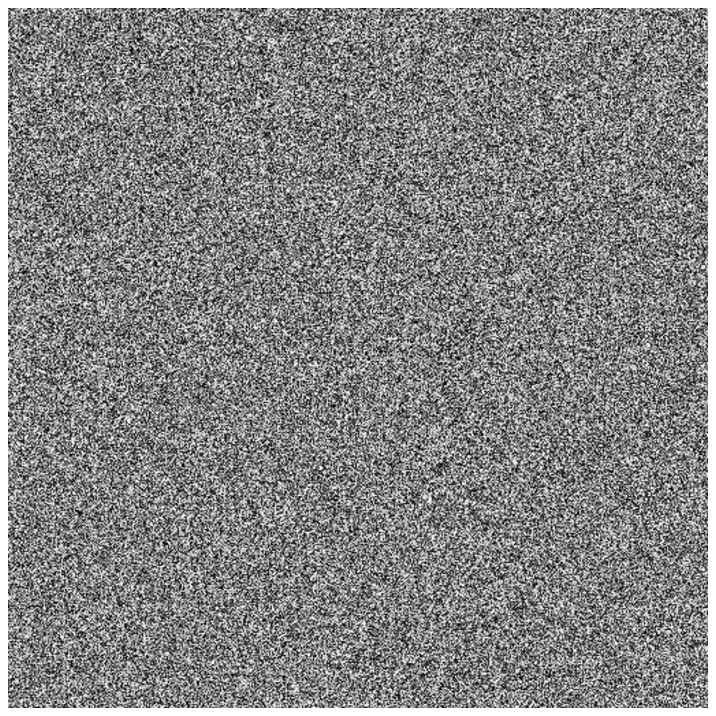}
a)
\end{minipage}
\begin{minipage}{\figwidth}
\centering
\includegraphics[width=\textwidth]{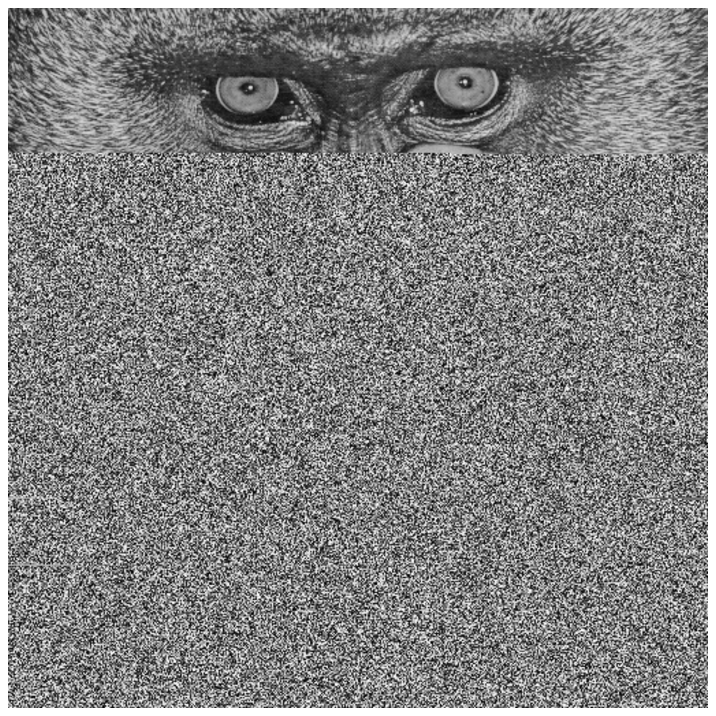}
b)
\end{minipage}
\begin{minipage}{\figwidth}
\centering
\includegraphics[width=\textwidth]{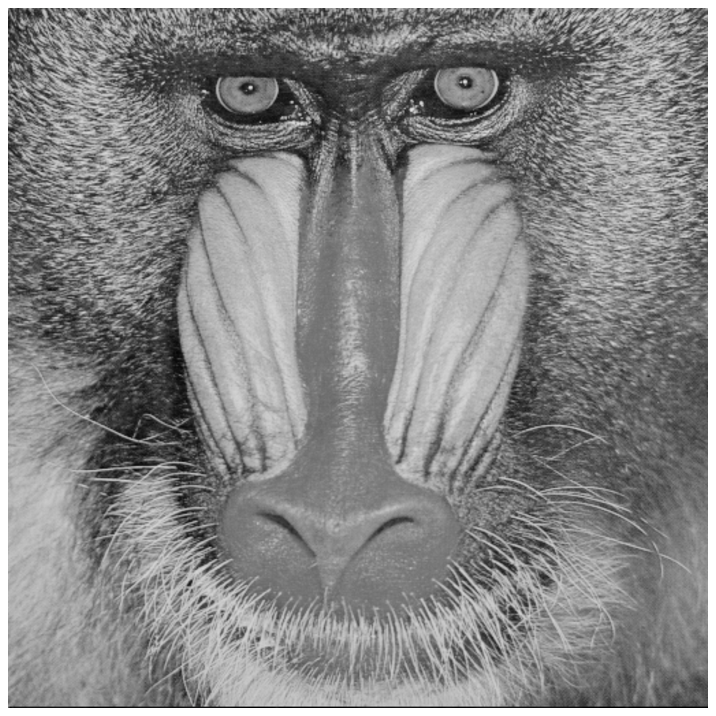}
c)
\end{minipage}
\caption{Recovery results:
a)~Deciphered result using the key stream retrieved from Fig.~\ref{fig:decrypt2}a);
b)~Deciphered result using the key stream retrieved from Fig.~\ref{fig:decrypt2}a) and b);
c)~~Deciphered result using the key stream retrieved from Fig.~\ref{fig:decrypt2}a)-c).}
\label{fig:decrypt3}
\end{figure}

\section{Cryptographic applications}
\label{sec:applications}
Exploiting the security analyses of Eq.~(\ref{eq:model1}) and Eq.~(\ref{eq:model2}) shown above, this section presents
plaintext attacks to the full cryptosystems proposed in \cite{parvin2014new,norouzi2014simple,yang2015novel} and briefly discusses
other security implications related to our analyses.

\begin{enumerate}[A.]
\item{\textbf{Cryptanalysis of Parvin's cryptosystem}}

As described in Sec.~\ref{sec:threeciphers}, Parvin's cryptosystem is composed of circular permutations and a single diffusion stage.
To apply our analysis result presented in Sec.~\ref{subsec:5.1}, we need first to recover the equivalent key streams used for
row and column circular permutation. The underlying strategy is to study the relationship between cipher-images produced by some
some bottom-line chosen plain-images whose elements are invariant with respect to row and column permutations. Similar ideas are
also employed to analyze other chaos-based cryptosystems \cite{wang2014cryptanalysis,arroyo2013cryptanalysis,li2012breaking}.
Here, we suppose that an image having fixed gray value is available and denote it as $P_1=\{p_1(i, j) \equiv 0\} _{i=1, j=1}^{H, W}$.
Then, we set $p_1(1, 1)=128$ and keep all the other pixels unchanged and denote the modified image by $P_2=\{p_2(i, j)\}_{i=1, j=1}^{H, W}$.
Figure~\ref{fig:decryptParvin}a) and b) depict the cipher-images corresponding to $P_1$ and $P_2$, respectively. Here, $H = W = 512$ is
chosen. The difference of the two cipher-images is shown in Fig.~\ref{fig:decryptParvin}c). Find the first pixel whose value
is $128$ and denote its position by $(i_1, j_1)$. Referring to Eqs.~(\ref{eq:parvin1}), (\ref{eq:pavin2}) and (\ref{eq:diffusion}),
it can be concluded that $u(1) = ((j_1-1) \bmod H) +1$ and $v(1) = ((i_1-1) \bmod W) +1$. Repeat this test for all the diagonal pixels of $P_1$,
$U$ and $V$, the key streams for row and column permutations, can be retrieved completely. Combining with the analysis presented
in Sec.~\ref{subsec:5.1}, the data complexity of the CP attack is $O(1)+\max(H, W)$ with an overwhelming probability.
\begin{figure}[!htb]
\centering
\begin{minipage}{\figwidth}
\centering
\includegraphics[width=\textwidth]{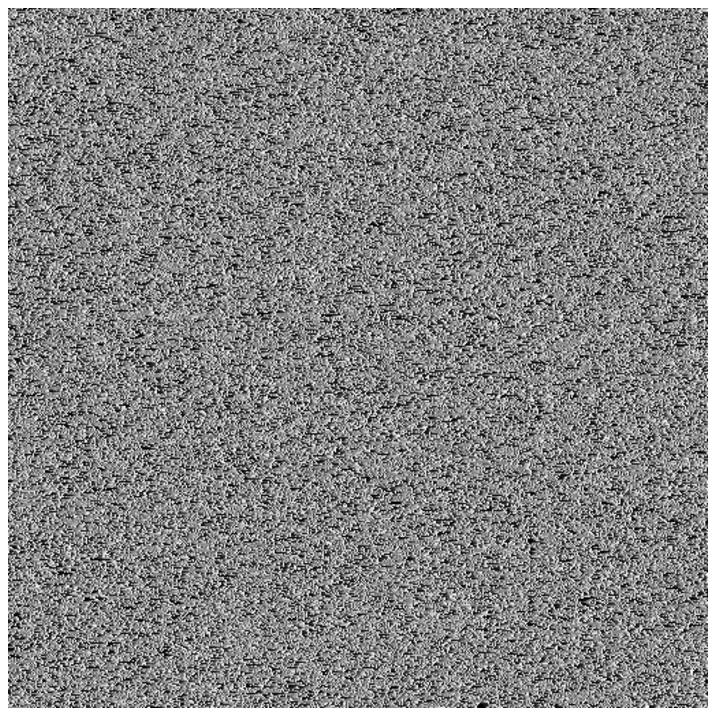}
a)
\end{minipage}
\begin{minipage}{\figwidth}
\centering
\includegraphics[width=\textwidth]{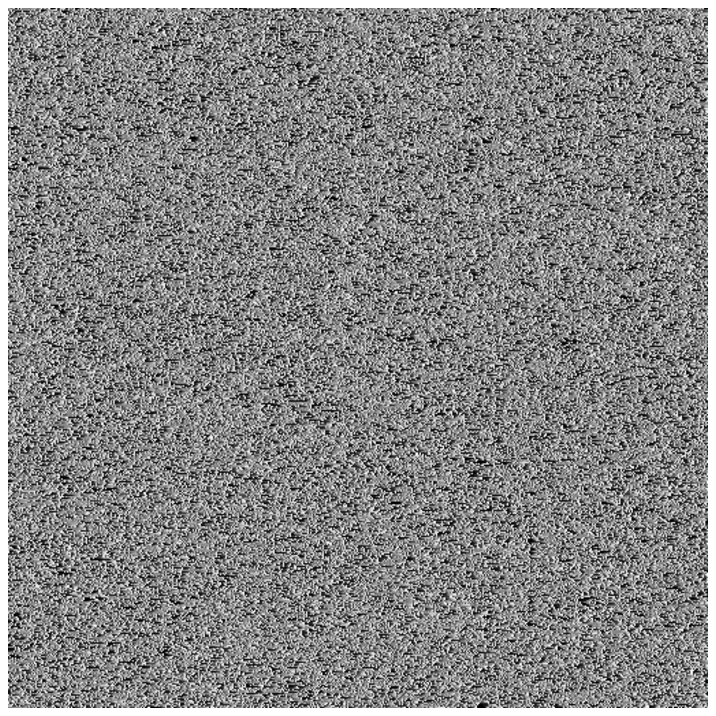}
b)
\end{minipage}
\begin{minipage}{\figwidth}
\centering
\includegraphics[width=\textwidth]{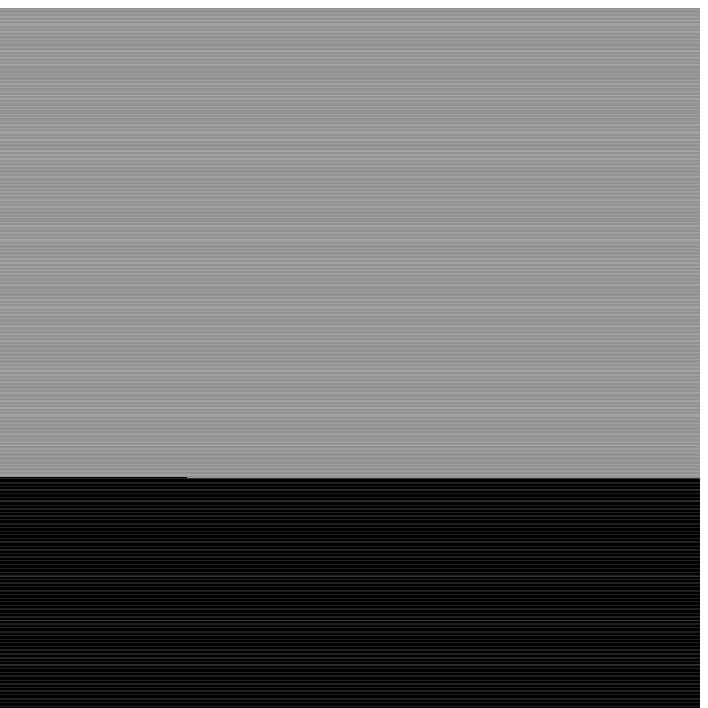}
c)
\end{minipage}
\caption{Example test for recovering the equivalent permutation key streams of Parvin's cryptosystem:
a) The cipher-image of $P_1$;
b) The cipher-image of $P_2$;
c) The difference between Figs.~\ref{fig:decryptParvin}a) and b) using XOR operation.}
\label{fig:decryptParvin}
\end{figure}

\item{\textbf{Cryptanalysis of Norouzi's and Yang's cryptosystems}}

Applying the analysis presented in Sec.~\ref{subsec:5.2}, it is readily to conclude that Norouzi's cryptosystem can be
compromised in KP attack scenario at data complexity $O(1)$. For Yang's scheme, the remaining task is to recover the
remaining key streams used for permutation. By noting that Yang's scheme is different from Parvin's only by the order
of diffusion and permutation in the present context, we use the similar strategy to reveal the equivalent permutation
key streams of Yang's cryptosystem. For example, to reveal $v(H)$ and $u(W-2)$, we employ three chosen-images $P_1$, $P_2$
and $P_3$ with the form
\begin{IEEEeqnarray}{rCl}
P_1  &=& [0, 0, 0, \cdots, 0, 0, 0, 1], \nonumber \\
P_2  &=& [0, 0, 0, \cdots, 0, 0, 1, 0], \nonumber \\
P_3  &=& [0, 0, 0, \cdots, 0, 1, 0, 0]. \nonumber
\end{IEEEeqnarray}
According to Eqs.~(\ref{eq:Yang1}), (\ref{eq:Yang2}) and (\ref{eq:Yang3}), their corresponding cipher-images $C_1$, $C_2$ and $C_3$
satisfy the following two conditions: 1) there are two distinct ciphertext elements between $C_1$ and $C_2$, 2) there are
three distinct ciphertext elements between $C_3$ and $C_1$ (or $C_2$). Comparing $C_1$, $C_2$ and $C_3$, the location
of $c_1(H, W-2)$ can be identified. Figure~\ref{fig:illustration} sketches the rules involved in this procedure.
Repeat this test to the last row and column of $P_1$, the equivalent permutation key streams $U$ and $V$ can be fully recovered
at the data complexity\footnote{The permutation for the last two pixels can be retrieved by brute force search.}
$O(H+W)$ under CP attack.

\begin{figure}[htbp]
\centering
\includegraphics[width=6in]{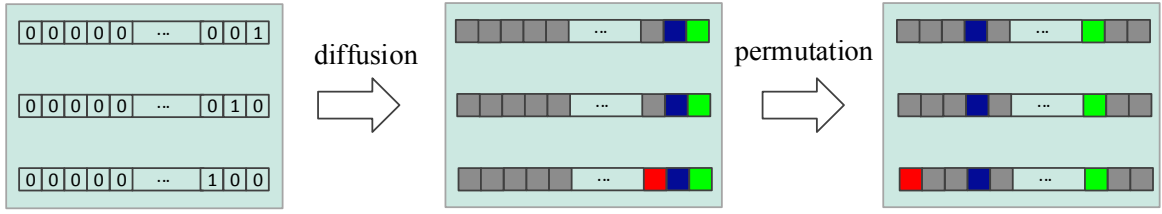}
\caption{Illustration of the CP attack on Yang's cryptosystem to recover the equivalent secret key used for permutation.}
\label{fig:illustration}
\end{figure}

\item{\textbf{Other cryptographic implications}}

Observing that the analysis with respect to the equation $ (\alpha \dotplus k) \oplus g(\beta, k) = y$ involves exhaustive
searching the possible key space, an intuitive workaround for Norouzi's and Yang's cryptosystems is to group several pixels as a
single element to enlarge the real key space. For example, combine $15$ pixels together will make the key space grows
to $2^{120}$ and frustrate the KP attack presented in Sec.~\ref{subsec:5.2}. However, Norouzi's and Yang's cryptosystems can
be cast to the form of $(\alpha \dotplus k) \oplus (\beta \dotplus k) = y$ in CP attack scenario and cryptanalysis of this
equation is regardless of the bit length of the plaintext. It can be concluded that using composite pixel representation as a
remedy is futile.

Regarding the widely usage of the diffusion equation~(\ref{eq:diffusion}) \cite{fridrich1998symmetric,zhu2012novel,eslami2013improvement,zhang2011novel,chen2004symmetric,mao2004novel,liu2011color,
zhou2015cascade}, our analysis on the equation $(\alpha \dotplus k) \oplus (\beta \dotplus k) = y$ seems useful in evaluating
security of other ciphers also based on this kind of diffusion mechanism. The fact that the search space of the unknown $k$
could be reduced from $2^{2n}$ to $O(1)$ indicates
that a loophole exists in the corresponding crytosystems, and that it can be used to retrieve information
about the key.
Even worse, this loophole cannot be fixed by choosing a larger $n$. With this concern, we recommend
using some relative strong diffusion schemes with respect to KP and CP attacks, such as
$(k_1 \dotplus k_2)\oplus (k_1 \dotplus (k_2 \oplus \beta)) = y$ \cite{paul2005solving}.
\end{enumerate}

\section{Conclusion}
\label{sec:conclusion}
Considering the three cryptosystems proposed in \cite{parvin2014new,norouzi2014simple,yang2015novel} as case studies,
we have studied the security properties of equations (i) $(\alpha \dotplus k) \oplus (\beta \dotplus k) = y$  and (ii)
$(\alpha \dotplus k) \oplus g(\beta, k)=y$. The underlying theory of the key scheduling process employed in these example
crytosystems ranging from chaotic/hyper-chaotic function to quantum computation, which are regarded as having different
characteristics. However, our analyses reveal that all the three ciphers are very weak upon plaintext attacks.
Specifically, the equivalent key streams used in these designs can be retrieved using a small number of plain-images.
We provide a sufficient condition to determine the unknown $k$ of equation (i) under the KP attack scenario. The
relationship of our result and the existing ones under CP attack assumption \cite{li2011breaking,li2013breaking,wang2014cryptanalysis}
is also investigated. The algorithms provided and the extensive numerical experiments confirm that both equation (i) and (ii) can
be solved using only $O(1)$ known plaintexts. In this concern, it is readily to conclude that most image ciphers based
on a single round permutation-diffusion architecture are insecure with respect to plaintext attacks. Our work can be extended
to investigate diffusion equations involves more complex cryptographic primitives, such as modulo multiplication \cite{Zhang2015breaking}.

\section*{Acknowledgements}

This research was partly supported by the Research Activities Fund of City University of Hong Kong and
Fundamental Research Funds for the Central Universities (XDJK2015C077).

\appendix

\section{Proof of Theorem~2}
\label{sec:proof}
Let us consider the equivalent form of Eq.~(\ref{eq:model1}), i.e.,
\begin{equation}
\tilde{y} = (\alpha \dotplus k) \oplus (\beta \dotplus k) \oplus \alpha \oplus \beta.
\label{eq:equalform}
\end{equation}
Observe that the $(i+1)$-th bit of $\tilde{y}$, i.e., $\tilde{y}_{i+1}$, can be calculated using only the previous bits $\alpha_i$,
$\beta_i$, $k_i$, $c_i$, $\tilde{c}_i$, ($i \in [0, n-2]$) by the following three equations
\begin{equation}
\left\{
\begin{IEEEeqnarraybox}[][c]{rCl}
\IEEEstrut
\tilde{y}_{i+1} &=& {c}_{i+1} \oplus \tilde{c}_{i+1} , \\
c_{i+1} &=& k_i\alpha_i \oplus k_ic_i \oplus \alpha_ic_i , \\
\tilde{c}_{i+1} &=&k_i\beta_i \oplus k_i\tilde{c}_i \oplus \beta_i\tilde{c}_i,
\IEEEstrut
\end{IEEEeqnarraybox}
\right.
\label{eq:bitform}
\end{equation}
where $c_i$ is the carry bit at the $i$-th bit plane of $(\alpha \dotplus k)$ and $\tilde{c_i} = \tilde{y_i} \oplus c_i$.
Table~\ref{table:tildeyi} lists the values of $\tilde{y}_{i+1}$ that computed from Eq.~(\ref{eq:bitform}) under all the possible values
of $\alpha_i, \beta_i, \tilde{y}_i, k_i$ and $c_i$.
\begin{table}[htp!]
\centering
\caption{The values of $\tilde{y}_{i+1}$ corresponding to the values of $\alpha_i, \beta_i, \tilde{y}_i, k_i$ and $c_i$.}
\centering
{\begin{tabular}{*{8}{c}c}
\toprule
\multirow{2}{*}{$(k_{i}, {c}_{i})$}     & \multicolumn{8}{c}{$(\alpha_i, \beta_i, \tilde{y}_{i} )$} \\
\cline{2-9} & $(0,0,0)$ &$(0,0,1)$& $(0,1,0)$ & $(0,1,1)$& $(1,0,0)$& $(1,0,1)$  &$(1,1,0)$   &$(1,1,1)$ \\\midrule
(0, 0)      &     0     &   0     &   0       &   1      &   0      &   0        &      0     &  1       \\
(0, 1)      &     0     &   0     &   1       &   0      &   1      &   1        &      0     &  1       \\  \midrule
(1, 0)      &     0     &   1     &   1       &   1      &   1      &   0        &      0     &  0       \\
(1, 1)      &     0     &   1     &   0       &   0      &   0      &   1        &      0     &  0       \\ \hline
            &   Col(1)  &    Col(2) &   Col(3)  &   Col(4)  &   Col(5)  &   Col(6)  & Col(7)&  Col(8)\\
\bottomrule
\end{tabular}}
\label{table:tildeyi}
\end{table}

Table~\ref{table:tildeyi} indicates that $k_i$ can be determined if $(\alpha_i, \beta_i, \tilde{y}_{i})$ falls in
\{Col(2), Col(3), Col(5), Col(8)\}, i.e., $y_i = \tilde{y}_i \oplus \alpha_i \oplus \beta_i =1$, and $c_i$ is known.
Based on this observation, the theorem can be proved by mathematical induction on $i$ ($0 \leq i \leq n-2$). We first
consider the case for $i=0$. Since $c_0 \equiv \tilde{c}_0 \equiv 0$, the condition
\begin{IEEEeqnarray}{rCl}
y_0 &=& \tilde{y}_0 \oplus \alpha_0 \oplus \beta_0 \nonumber \\
  &=& c_0 \oplus \tilde{c}_0 \oplus \alpha_0 \oplus \beta_0 \nonumber \\
  &=& \alpha_0 \oplus \beta_0 \nonumber\\
  &=& 1 \nonumber
\end{IEEEeqnarray}
implies
\begin{IEEEeqnarray}{rCl}
\tilde{y}_1 &=& c_1 \oplus \tilde{c}_1 \nonumber \\
            &=& k_0 \alpha_0 \oplus k_0 \beta_0 \nonumber \\
            &=& k_0(\alpha_0 \oplus \beta_0) \nonumber\\
            &=& k_0. \nonumber
\end{IEEEeqnarray}
Hence the theorem is proved for the case $i=0$. Assume that it is valid for $i = m$ ($m \leq n-3$), i.e.,
all the $m$ least significant bits of $k$ are confirmed when $y = \sum\nolimits_{j=0}^{m-1} 2^m$ and thus all the $c_i$ and $\tilde{c}_i$
can be derived by Eq.~(\ref{eq:bitform}) for all $i \in [0, m+1]$. Then, for the case $i=m+1$, the condition $y_{m+1} =1$  implies that
\begin{IEEEeqnarray}{rCl}
{y}_{m+1} = c_{m+1} \oplus \tilde{c}_{m+1} \oplus \alpha_{m+1} \oplus \beta_{m+1}   = 1 \nonumber
\end{IEEEeqnarray}
holds when referring to Eqs.~(\ref{eq:equalform}) and (\ref{eq:bitform}). When computing $y_{m+2}$ by Eq.~(\ref{eq:bitform}),
we have
\begin{IEEEeqnarray}{rCl}
\tilde{y}_{m+2} &=& c_{m+2} \oplus \tilde{c}_{m+2} \nonumber \\
                &=& k_{m+1} \alpha_{m+1} \oplus k_{m+1} \beta_{m+1} \oplus k_{m+1} c_{m+1} \oplus k_{m+1} \tilde{c}_{m+1}
                \oplus \alpha_{m+1} c_{m+1} \oplus \beta_{m+1} \tilde{c}_{m+1}
                \nonumber \\
                &=&  k_{m+1}(\alpha_{m+1} \oplus \beta_{m+1}\oplus c_{m+1} \oplus \tilde{c}_{m+1})
                \oplus \alpha_{m+1} c_{m+1} \oplus \beta_{m+1} \tilde{c}_{m+1} \nonumber\\
                &=& k_{m+1} \oplus \alpha_{m+1} c_{m+1} \oplus \beta_{m+1} \tilde{c}_{m+1}.  \nonumber
\end{IEEEeqnarray}
Observing that $\alpha_{m+1}$, $\beta_{m+1}$ and $\tilde{y}_{m+2}$ are known parameters in our KP attack scenario,
$c_{m+1}$ and $\tilde{c}_{m+1}$ are the result from the previous induction step, we conclude that
\begin{IEEEeqnarray}{rCl}
{k}_{m+1} = \tilde{y}_{m+2}  \oplus \alpha_{m+1} c_{m+1} \oplus \beta_{m+1} \tilde{c}_{m+1}, \nonumber
\end{IEEEeqnarray}
thus completing the mathematical induction and hence proving the theorem.



\bibliographystyle{elsarticle-num}
\bibliography{ref}







\end{document}